    \newcolumntype{L}{>{\raggedright\arraybackslash}X}
\newcommand{\mech}{\mathcal{M}}
\begin{document}
\raggedbottom
\author[Neuder]{Michael Neuder$^\textrm{A}$}
\address{$^\textrm{A}$ Ethereum Foundation}

\author[Pai]{Mallesh M Pai$^\textrm{B}$}
\address{$^\textrm{B}$ Special Mechanisms Group and Rice University}

\author[Resnick]{Max Resnick$^\textrm{C}$}
\address{$^\textrm{C}$ Special Mechanisms Group}

\title{Optimizing Exit Queues for Proof-of-Stake Blockchains: A Mechanism Design Approach}
\thanks{\textit{The authors thank Aditya Asgaonkar, Vitalik Buterin, Francesco D'Amato, Barnabé Monnot,  and  Tim Roughgarden for helpful discussions. Code available at \href{https://github.com/michaelneuder/withdrawals/}{https://github.com/michaelneuder/withdrawals/}.}}

\begin{abstract}
Byzantine fault-tolerant consensus protocols have provable safety and liveness properties for static validator sets. In practice, however, the validator set changes over time, potentially eroding the protocol’s security guarantees. For example, systems with accountable safety may lose some of that accountability over time as adversarial validators exit. As a result, protocols must rate limit entry and exit so that the validator set changes slowly enough to ensure security. Here, the system designer faces a fundamental trade-off. The harder it is to exit the system, the less attractive staking becomes; the easier it is to exit the system, the less secure the protocol will be.

This paper provides the first systematic study of exit queues for Proof-of-Stake blockchains. Given a collection of validator-set consistency constraints imposed by the protocol, the social planner’s goal is to provide a constrained-optimal mechanism that minimizes disutility for the participants. We introduce the \texttt{MINSLACK} mechanism, a dynamic capacity first-come-first-served queue in which the amount of stake that can exit in a period depends on the number of previous exits and the consistency constraints. We show that \texttt{MINSLACK} is optimal when stakers are homogeneous in their need for liquidity. When stakers are heterogeneous, the optimal mechanism resembles a priority queue with dynamic capacity. However, this mechanism must reserve exit capacity for the future in case a staker with a more urgent need for liquidity arrives. We conclude with a survey of known consistency constraints and highlight the diversity of existing exit mechanisms.
\end{abstract}

\begin{titlepage}
    \maketitle
\end{titlepage}

\section{Introduction}
In Proof-of-Stake networks, validators must stake tokens to participate in the consensus protocol. These staked tokens serve two purposes. First, they solve the problem of Sybil resistance: agents who operate two validators must procure twice as much stake as those who only manage one. Second, they allow the protocol to hold validators accountable for violating the predefined rules. A validator's stake is held in escrow and can be \textit{slashed} if adversarial behavior is detected, providing \emph{crypto-economic security} to the system.\footnote{See \cite{deb2024stakesure} for an extended definition of crypto-economic security.} Most modern blockchains (e.g., Ethereum and Solana) implement a version of Proof-of-Stake and the principles of staking have been extended beyond base-layer chains and into the smart contract layer (e.g., re-staking as popularized by EigenLayer). 

The literature typically treats the set of stakers as static to establish positive results; however, in practice, staking protocols have a validator set that changes over time. New agents may arrive and wish to stake, while existing stakers may want to withdraw their tokens for use elsewhere (see, e.g., \cite{chitra2021competitive}). But how should a stake-based protocol design this egress procedure?\footnote{Similar considerations apply to the design of deposit (ingress) procedures; we focus on withdrawals (egress) in the present paper.} 
There are two competing desiderata. The first is the security of the underlying protocol. For example, if a malicious validator can corrupt the service for personal gain but then withdraw their stake before the corruption is detected, the validator is immune to punishment, and the protocol is not secure. We describe these concerns in more detail in Section \ref{sec:limit}. 
The second is ensuring that validators can quickly enter and exit the system since delays decrease the utility of participation. Offering fast withdrawals also indirectly benefits the protocol since, \emph{ceteris paribus}, a more rigid protocol must offer higher rewards in the form of emissions to compensate users for the decrease in their optionality. While these general principles are well understood, the design of the exit procedures in the context of blockchains has not yet attracted much formal attention.\footnote{We defer a discussion of the related literature to Section \ref{sec:rellit}.} 
As a consequence, the optimal queue designs we suggest in Sections \ref{sec:homogeneous} and \ref{sec:heterogenous} perform much better than those currently used in practice, which we survey in Section \ref{sec:theory-and-practice}. 

The first contribution of this paper is to formally define the designer’s dilemma as a constrained optimization problem: minimizing the adverse effects of withdrawal delays while satisfying the protocol’s safety constraints. In the setting where all validators have the same time sensitivity, we show that a stateful, first-come-first-served queue where the amount of stake withdrawn in each period depends on the history of previous periods is constrained optimal.

However, even among honest validators, the desire to exit can be heterogeneous---for example, a capital-constrained validator might need to withdraw urgently to meet a margin call elsewhere. In this setting, a first-come-first-served queue is no longer optimal, because time-sensitive stakers incur a higher penalty from delay, so allowing them to pay to cut the line increases aggregate welfare. Therefore, the optimal mechanism processes transactions based on their willingness to pay for priority. Further, in some cases, the optimal mechanism reserves capacity for the future in case more time-sensitive agents arrive. We formally define this mechanism as the solution to a Markov Decision Process (MDP) and show that an appropriately defined dynamic Vickrey-Clark-Groves (VCG) mechanism can implement the efficient outcome.  

We complement these results with a survey to connect our theoretical model to practice. First, we discuss the exit mechanisms in use today by popular blockchains. Our results suggest that some of these mechanisms are either (highly) sub-optimal or the designers believed the mechanism should satisfy additional constraints external to our model. Further, we should note that no protocol that we are aware of uses a payment-for-priority mechanism.\footnote{Priority payment is standard in other congested parts of blockchains, notably in the context of transaction fees, and a substantial literature explores the design of such fees, see, e.g., \cite{roughgarden2020transaction, huberman2021monopoly}.}  Combined with our theoretical results, this collection may be helpful for blockchains and staking protocols more generally to design or improve their exit procedures. 

\textbf{Organization.} Section~\ref{sec:rellit} presents the related literature. Section~\ref{sec:model} defines the model and outlines the form of the security constraints of a staking system. Section~\ref{sec:homogeneous} studies the common-value setting by defining \texttt{MINSLACK} and proving its optimality. Section~\ref{sec:heterogenous} introduces heterogeneous values, formalizes the extended problem as an MDP, and presents numerical results quantifying the performance of various algorithms. Section~\ref{sec:theory-and-practice} justifies the form of the constraints, presents the intricacies of the Ethereum design, and surveys other staking withdrawal procedures. Section~\ref{sec:concl} concludes.

\section{Literature Review}\label{sec:rellit}
Early in the development of Ethereum, Vitalik outlined concerns about a long-range attack on a Proof-of-Stake blockchain \citep{buterin2014howilearned}. In particular, he described the potential ``nothing-at-stake'' attack, in which a malicious staker withdraws his \texttt{ETH} while building a competing fork starting from a historical epoch before he withdrew. This way, the staker cannot be punished for creating the fork because he has exited from the consensus mechanism. 
The solution, he argued, was \emph{weak subjectivity}, where nodes locally store a subjectivity ``checkpoint'' block and ignore any messages from before that epoch \citep{aditya2020wseth}. Weak-subjectivity checkpoints prevent long-range attacks but require the validator set to change slowly enough to reach a subjectivity checkpoint without a nothing-at-stake attack. The naïve approach simply delays \textit{all} withdrawals for the weak-subjectivity period, guaranteeing the chain's safety. Buterin argued in \cite{buterin2018averagecase} that this imposed too strict a penalty on validators who wanted to withdraw under normal circumstances when there was no evidence of an attack, arguing for an exit queue model instead. 
The simple first-come-first-served (FCFS) exit queue proposed allowed validators to ``withdraw very quickly in `the normal case' but not during attacks''. \cite{buterin2019entriesnotwithdrawals} elaborated on this point and fleshed out the arguments for an FCFS exit queue with a fixed number of exits per epoch (as a fraction of the total stake). 
\cite{buterin2019weaksubjectivequeue} gave a formal case for why the consistency imposed by the exit queue was enough to safely last until the next weak-subjectivity checkpoint through an inductive argument. As detailed in Section~\ref{sec:theory-and-practice}, the FCFS exit queue has been used in Ethereum since April 2023, when the Shanghai/Capella hard fork enabled beacon chain validators to withdraw. More generally, Buterin was concerned with preserving the formal property known as \textit{accountable safety} \citep{buterin2017casper}. Accountable safety guarantees that in the case of a safety violation, the fault is attributable to a subset of validators (because they must have signed conflicting attestations).

\cite{lewispye2024permissionless,budish2024economic} formalized the economic limits of consensus mechanisms and showed that no partially synchronous protocol can fully implement slashing rules without bounding the resolution time of communication between honest nodes. Given some bound on this overhead, protocols could implement slashing against an attacker with $<2/3$ of the total stake, a positive result that justifies using the weak-subjectivity period as a heuristic for preventing nothing-at-stake attacks. \cite{neu2023accountablesafety} formalized the relationship between accountable safety and finality, while \cite{asgaonkar2024confirmation} introduced a new confirmation rule for potentially improving the pre-finality guarantees for transactions in Ethereum. \cite{oisin} proposed allowing some withdrawals to be processed ahead of others by the nature of originating from a different source that required payment.

Systematic attention to the design of exit procedures in blockchains has been sparse; however, mechanism design has proved useful for blockchain designers in other contexts, particularly in designing transaction-fee mechanisms. The question here is similar: if there is a finite supply of block space and demand may exceed supply, how should the block space be allocated? Bitcoin used a simple `pay-as-bid' mechanism, which was fruitfully studied using tools from queueing theory in \cite{huberman2021monopoly}. Pay-as-bid mechanisms result in strategic bidding, contributing to poor user experience. In 2021, Ethereum adopted a dynamic reserve price mechanism, EIP-1559, which was comprehensively studied in the seminal \cite{roughgarden2020transaction} (see also \cite{roughgarden2021transaction}). There has been recent interest in studying dynamic mechanism design in this setting--- see e.g., \cite{nisan2023serial, pai2024dynamic}. In a different context, a few market design papers have studied the design of queues, mainly in organ transplantation--see, e.g., \cite{leshno2022dynamic} and \cite{su2006recipient}.

\section{Model}\label{sec:model} 

Time is discrete, and each period corresponds to a point during which a validator may request a withdrawal and be removed from the active set --- for example, Ethereum processes withdrawals at epoch boundaries. Denote the set of possible validators, $V$, and at each time $t$, let $S(v,t) \in \{0,1\}$ denote whether validator $v$ is currently staked or not.\footnote{For simplicity, we assume that each validator has the same quantity of tokens staked, normalized to $1$. This can easily be relaxed.} Thus the total amount staked at period $t$ is $\overline{S}(t) =\sum_v S(v,t)$. 

At the end of each period, any validator may signal their desire to withdraw their stake by joining a waiting list $W(t)$. For now, we model every element of the waiting list as a tuple $(v,t')$, where $v$ is the validator identity and $t'$ is the period at which they initiated their withdraw. Note that we must have that $t' \leq t$, as any element in the waiting list must have joined in the past. Let $R(t)$ be the set of exit requests arriving in period $t$.\footnote{Most blockchains also limit entry to have a stable validator set for consensus. In this paper, we focus on the design of exit queues and consider entry unrestricted.}
 
\newcommand{\hist}{\mathcal{H}}
An exit mechanism $\mech$ in each period $t$, given a waiting list $W(t)$, selects a subset $P(t) \subseteq W(t)$ of withdrawal requests to \emph{process}. We allow the exit mechanism's choices to depend on past choices. Formally, let us define a history of previous withdrawal requests as:
\[H(t) = (P(1), P(2) \ldots, P(t-1)),\]
and the set of all possible histories as $\hist$. A mechanism then formally is:
\[\mech: \hist \times W(t) \mapsto \{0,1\}^{|W(t)|},\]
where the binary string is an indicator function for the withdrawals processed during each period.
The system follows the rules of motion:
\begin{align*}
    &W(t) = W(t-1) \setminus P(t-1)  \cup R(t),\\
    &P(t) = \mech( H(t), W(t)),\\
    &H(t+1) = H(t) \cup P(t). 
\end{align*}
The stake distribution $S(\cdot, t)$  is then updated based on the exits and fresh entries. In words, $W(t)$ is the \emph{waiting list}  of withdraw requests at the \emph{beginning} of period $t$, $P(t)$ is the subset of waiting to withdraw requests that are \emph{processed} in period $t$. $H(t)$ is the \emph{history} of processed requests up to and including period $t$. 

The number of withdrawals allowed over various time horizons constrains the protocol designer. We model this as a finite set of constraints, each described by a tuple $(\delta, T) \in [0,1] \times \mathbb{N}$. A constraint of $(\delta, T)$ means that if, in any period $t$, the total stake is $\overline{S}(t)$, then the maximum number of withdrawals processed over the following $T$ periods (from $t+1$ thru $t+T$) is bounded above by $\delta \times \overline{S}(t)$. We take the constraints as given, motivating this construction in Section \ref{sec:limiting}. Formally, the designer faces some $k$ constraints given by $\mathcal{C} = \{(\delta_1, T_1), \ldots, (\delta_k, T_k)\}$ and aims to maximize the utility of the validators withdrawing from the staking system. Calculating this utility depends on validators having differing values for exiting the system; we begin by examining the simplest case, where each validator has a common value.

\section{Homogeneous Values}\label{sec:homogeneous}
To begin our analysis, we consider the case where all agents have the same value for withdrawing or equivalently face the same economic penalty for each period between when they make a withdrawal request and when that request is fulfilled. In this case, the social planner cannot increase efficiency by reordering withdrawal requests, so efficiency demands that exit requests be processed as quickly as possible without violating the established constraints.

Given the constraint set $\mathcal{C}$, the following algorithm, which we call \texttt{MINSLACK}, greedily processes the maximum amount of withdrawals allowed within the bounds of the constraints. In other words, for every constraint $(\delta_i,T_i)$, calculate the difference between $\delta_i \overline{S}(t-T_i)$ and the amount withdrawn in periods $t-T_i$ thru $t-1$.%
\footnote{For expositional simplicity, we elide over the difficulties caused by the fact that $\delta_i \overline{S}(\cdot)$ may not be a whole number. In what follows we implicitly assume that this is a whole number, alternately, we could allow for fractional withdrawals at the cost of significantly messier notation.}
This difference is the maximum number of withdrawals constraint $i$ allows in period $t$ given the previous history. It follows that the lowest of these slacks is the maximum amount that can withdrawn in this period without violating constraints. Since the protocol is indifferent about the withdrawal order, if there is more demand for withdrawing than the allowed quantity, a natural solution is to use the FCFS rule to tie-break. We present this algorithm as Algorithm \ref{alg:1}.

\begin{algorithm}
\caption{\texttt{MINSLACK} \label{alg:1}}
\begin{algorithmic}[1]
\STATE \textbf{Input:} Constraints $\mathcal{C} = \{(\delta_1, T_1), \ldots, (\delta_k, T_k)\}.$
\STATE \textbf{Input:} Initial staking $S(\cdot,0)$. 
\STATE $\overline{S}(0) \gets \sum_v S(v,0).$
\STATE \textbf{Initialize:} $H(0), W(0), P(0) \gets$ \texttt{NULL}. 
\STATE \textbf{Initialize:} $\overline{P}(0) = 0$.
\FOR{each period $t \geq 1$}
\STATE $W(t) \gets W(t-1) \setminus P(t-1) \cup R(t).$
\FOR{each constraint $i \leq k$}
\STATE \texttt{SLACK}$_i \gets \delta_i \overline{S}(t-T_i) - \sum_{\tau = t-T_i+1}^{t-1} \overline{P}(\tau).$
\ENDFOR 
\STATE \texttt{MINSLACK} $\gets \min \{$\texttt{SLACK}$_i: 1 \leq i \leq k\}$.
\STATE $P(t) \gets$ Largest prefix of W(t) such that total withdrawn $\leq$ \texttt{MINSLACK}
\STATE $\overline{P}(t)$ $\gets$ Total withdrawn in $P(t)$
\STATE $H(t+1) \gets H(t) \cup P(t)$
\STATE \textbf{Update:} $S(v,t)$ based on $P(t)$.
\ENDFOR  
\end{algorithmic}
\end{algorithm}

\begin{figure}
    \centering
    \includegraphics[width=0.7\textwidth]{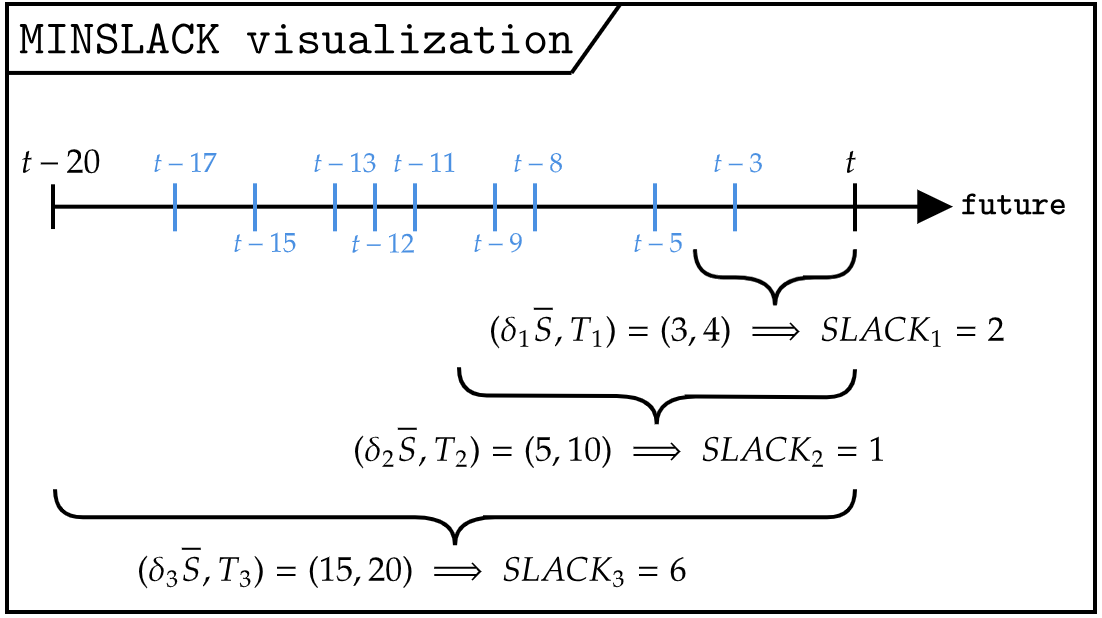}
    \caption{A visual example of the calculation of $\texttt{SLACK}_i$ used in Algorithm~\ref{alg:1} (\texttt{MINSLACK}). The example constraints $\mathcal{C} \implies \{(3,4), (5,10), (15,20)\}$ are read as, e.g., $(3,4) \implies $ ``at most three withdrawals over the next four consecutive time steps.'' In the diagrammed example, the blue vertical lines represent the timestamps of processed withdrawals. With $\texttt{SLACK}_2=1$, the \texttt{MINSLACK} algorithm can process at most one withdrawal during the current period while still conforming to the constraints.}
    \label{fig:minslackviz}
\end{figure}

Proving that this algorithm is feasible and optimal is straightforward: as designed, it processes the maximum amount allowed by the protocol constraints, but never more. Before presenting this result, we explain why such a queue design may be helpful. As we describe in Section~\ref{sec:eth}, the relevant constraints are that a given fraction of stake cannot withdraw over an extended period (e.g., $\mathcal{O}$(weeks)). Nevertheless, the actual queue implemented on Ethereum allows the withdrawal of at most eight validators per epoch (a value set in EIP-7514, \citealt{eip7514}). In practice, validators must wait longer than required during periods with higher-than-expected withdrawals. For example, in January 2024, the withdrawal queue on Ethereum rose to about 16,000 validators or about 5.5 days at peak due to Celsius bankruptcy proceedings.\footnote{See \url{https://www.validatorqueue.com/} for historical data about the withdrawal queue.} However, there were about 900k total validators during this period, so processing these withdrawal requests immediately would not have violated the consistency constraints defined by the ``weak-subjectivity period'' \citep{aditya2020wseth}. With this motivation, we present a formal treatment of \texttt{MINSLACK}.

\begin{theorem}\label{thm:homogenous}
    Given any sequence of withdrawal requests $R(\cdot)$, let $P(\cdot)$ be the processed withdrawal requests and $\overline{P}(\cdot)$ be the resulting total amount withdrawn in each period by Algorithm \ref{alg:1}. Then:
    \begin{enumerate}
        \item \textbf{Feasibility:} $P(\cdot)$ is feasible with respect to the protocol constraints. 
        \item \textbf{Optimality:} For any other feasible withdrawal decisions with total withdrawn in each period given by $\overline{P}'(\cdot)$, it must be the case that:
    \begin{align}\label{eqn:opt}
        \forall t \geq 1: & \sum_{\tau =1}^t \overline{P}'(\tau) \leq  \sum_{\tau=1}^t\overline{P}(\tau). 
    \end{align}
    \end{enumerate}
\end{theorem}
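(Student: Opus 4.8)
The plan is to establish the two parts separately: feasibility will follow almost directly from the construction, while optimality requires a ``first-deviation'' induction that tracks the cumulative withdrawal gap and carefully accounts for the fact that the constraint bounds $\delta_i\overline{S}(\cdot)$ themselves move with the realized stake path.

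For feasibility, I would observe that the $\texttt{SLACK}_i$ update is precisely the residual budget of constraint $i$ anchored at period $s=t-T_i$: rearranging the cap $\overline{P}(t)\le\texttt{SLACK}_i$ yields $\sum_{\tau=s+1}^{s+T_i}\overline{P}(\tau)\le\delta_i\overline{S}(s)$, which is exactly the $(\delta_i,T_i)$ constraint evaluated at anchor $s$. Since the algorithm never processes more than $\min_i\texttt{SLACK}_i$ in any period, letting $t$ range over all periods makes $s=t-T_i$ range over every anchor, so every instance of every constraint holds (anchors $s\le 0$ being vacuous, as no request is processed before period $1$).

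For optimality, fix an arbitrary feasible alternative and write $C(t)=\sum_{\tau\le t}\overline{P}(\tau)$ and $C'(t)=\sum_{\tau\le t}\overline{P}'(\tau)$ for the cumulative amounts, with gap $D(t)=C(t)-C'(t)$. The first fact I would record is that, because entries are exogenous and the initial stake common, the two stake paths differ by exactly this gap: $\overline{S}'(t)-\overline{S}(t)=D(t)$. I then argue by contradiction. Let $t^*$ be the earliest period with $D(t^*)<0$; differencing consecutive cumulatives gives $\overline{P}'(t^*)>\overline{P}(t^*)$, so \texttt{MINSLACK} processed strictly fewer requests than the alternative at $t^*$. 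If it was demand-limited there (it emptied its waiting list), then it has cleared every request that arrived through $t^*$, so $C(t^*)=\sum_{\tau\le t^*}|R(\tau)|\ge C'(t^*)$, a contradiction. Otherwise it was budget-limited, saturating some binding constraint $i^*$ at anchor $s^*=t^*-T_{i^*}$, i.e.\ $\sum_{\tau=s^*+1}^{t^*}\overline{P}(\tau)=\delta_{i^*}\overline{S}(s^*)$.

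The step I expect to be the main obstacle is closing this budget-limited case, precisely because \texttt{MINSLACK}'s own aggressiveness seems to work against it: having withdrawn more by $s^*$, it holds \emph{less} stake there ($\overline{S}(s^*)\le\overline{S}'(s^*)$) and hence faces a \emph{tighter} bound $\delta_{i^*}\overline{S}(s^*)$, so it is not a priori clear it can stay ahead. The resolution is to combine \texttt{MINSLACK}'s saturation with the alternative's feasibility $\sum_{\tau=s^*+1}^{t^*}\overline{P}'(\tau)\le\delta_{i^*}\overline{S}'(s^*)$ and the stake identity to obtain $D(t^*)\ge D(s^*)+\delta_{i^*}\big(\overline{S}(s^*)-\overline{S}'(s^*)\big)=D(s^*)(1-\delta_{i^*})$. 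Since $s^*<t^*$ forces $D(s^*)\ge 0$ by minimality of $t^*$, and $\delta_{i^*}\le 1$ by the definition of a constraint, this gives $D(t^*)\ge 0$, contradicting $D(t^*)<0$. The whole argument thus hinges on the observation that the feasibility restriction $\delta_i\le 1$ guarantees the constraint tightening induced by exiting faster is always dominated by the head start those exits create, so the greedy schedule stays cumulatively optimal despite the path-dependence of the stake.
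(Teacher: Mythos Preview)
Your argument is correct and follows essentially the same route as the paper's proof: a first-deviation contradiction that compares the slack available to the two schedules at the critical time, using the stake identity $\overline{S}'(s)-\overline{S}(s)=D(s)$ together with $\delta_i\le 1$ to show the extra slack the alternative enjoys cannot exceed the cumulative head start \texttt{MINSLACK} already has. Your version is in fact slightly tidier: you separate the demand-limited and budget-limited cases explicitly (the paper tacitly writes $\overline{P}(t)=\texttt{MINSLACK}(t)$, which only holds in the latter case), and you work directly with the single binding constraint $i^*$ rather than bounding $\texttt{SLACK}'_i-\texttt{SLACK}_i$ for every $i$ before taking the minimum.
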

\begin{proof}
    To show that the withdrawal resulting from \texttt{MINSLACK} is feasible, observe that in each period, the withdrawal amount is less than $\min \{$\texttt{SLACK}$_i: 1 \leq i \leq k\}$ so it necessarily satisfies all of the constraints. Since each withdrawal satisfies the constraints given the history, applying the algorithm always results in a history that is feasible by construction.

    For optimality, consider for the sake of contradiction that there exists a feasible $\overline{P}'(\cdot)$ that violates \eqref{eqn:opt}. Let $t$ be the earliest time such that:
    \[\sum_{\tau =1}^t \overline{P}'(\tau) >  \sum_{\tau=1}^t\overline{P}(\tau). \]
    Since $t$ is the earliest time to violate condition \eqref{eqn:opt}, we must have that for all $t' <t$, 
    \begin{align}\label{eqn:claim}
      \forall t' < t: \quad   \sum_{\tau =1}^{t'} \overline{P}'(\tau) \leq   \sum_{\tau=1}^{t'}\overline{P}(\tau).
    \end{align}
    Analogous to the algorithm, let us term $\texttt{SLACK}'_i(\cdot)$ as the maximum withdrawable amount in a given period given process $P'(\cdot)$, with $\texttt{MINSLACK}'(\cdot)$ defined as the smallest constraint $i \in 1, \ldots k$. Note that by feasibility, we must have the following:
    \[\overline{P}'(t) \leq \texttt{MINSLACK}'(\cdot).\]
    Conversely, we know that by construction (see Algorithm \ref{alg:1}), 
    \[ \overline{P}(t) =\texttt{MINSLACK}(t).\]
    For the contradiction, it is sufficient to show that
    \begin{align} \label{eqn:nts}
    \texttt{MINSLACK}'(t) - \texttt{MINSLACK}(t) \leq \sum_{\tau=1}^{t-1} \overline{P}(\tau) - \sum_{\tau=1}^{t-1} \overline{P}'(\tau). 
    \end{align}
    In other words, we must show that the additional slack available at time $t$ under $P'$ relative to $P$ is, at most, the difference between the amount withdrawn up to time $t-1$ under $P$ than $P'$. Feasibility of the withdrawals under $P'$ then contradicts the claim that $\sum_{\tau =1}^{t'} \overline{P}'(\tau) \leq   \sum_{\tau=1}^{t'}\overline{P}(\tau)$.
    To see \eqref{eqn:nts} it is sufficient to show that for each $1 \leq k:$
    \begin{align} \label{eqn:nts2}
    \texttt{SLACK}_i'(t) - \texttt{SLACK}(t) \leq \sum_{\tau=1}^{t-1} \overline{P}(\tau) - \sum_{\tau=1}^{t-1} \overline{P}'(\tau). 
    \end{align}
    The left-hand side of \eqref{eqn:nts2} can be rewritten as:
    \begin{align*}
       \texttt{SLACK}_i'(t) - \texttt{SLACK}(t) &= \delta_i \overline{S}'(t-T_i) - \sum_{\tau = t-T_i+1}^{t-1} \overline{P}'(\tau) - \left(\delta_i \overline{S}(t-T_i) - \sum_{\tau = t-T_i+1}^{t-1} \overline{P}(\tau)\right)\\
       &\leq  \left( \sum_{\tau=1}^{t-T_i}\overline{P}(\tau) + \sum_{\tau = t-T_i+1}^{t-1} \overline{P}(\tau) \right) - \left( \sum_{\tau=1}^{t-T_i}\overline{P}'(\tau) - \sum_{\tau = t-T_i+1}^{t-1} \overline{P}'(\tau)\right)\\
       &=\sum_{\tau=1}^{t-1} \overline{P}(\tau) - \sum_{\tau=1}^{t-1} \overline{P}'(\tau).
    \end{align*}
where the penultimate inequality follows since $\delta_i \in [0,1]$, and we have \eqref{eqn:claim}.    
\end{proof}
Thus, \texttt{MINSLACK} is optimal for the common value setting. Still, in reality, stakers may have disparate values for accessing their stake, motivating the need to explore how a withdrawal mechanism could account for heterogeneous values.

\section{Heterogeneous Values and Paying for Priority}\label{sec:heterogenous}
While Theorem \ref{thm:homogenous} shows that Algorithm \ref{alg:1} provides an optimal solution for the case when all stakers have a homogeneous value for withdrawing, in reality, they may have different values for getting access to their staked assets. A staking pool, for example, might be withdrawing some validators gradually to rotate the cryptographic keys used for participating in consensus. In this case, the pool has a relatively low value for their withdrawal because the underlying reason to withdraw is not highly time-sensitive. On the other hand, a hedge fund trying to withdraw staked capital in time to meet a margin call to avoid a forced liquidation may have an extremely high value for the liquidity from the withdrawal processing.

Each validator looking to exit has a delay cost per unit time $c$. The net payoff of a validator of type $c$ whose withdrawal occurs after a delay $\Delta$ for a price (bid) of $b$ is:\footnote{This is a standard model in the context of transaction fees, where users face a similar trade-off between paying for inclusion and suffering a delay---see, e.g., \cite{huberman2021monopoly}.}
\[U(\Delta,b, c) = - c \Delta - b .\]
 In other words, their utility is linear in time according to their per-period disutility of waiting, less the amount they pay. We consider this canonical linear form for simplicity. More generally, one can consider other forms for the utility, including the time-varying disutility of waiting, see, e.g., \cite{bergemann2010dynamic}.

As described below, the efficient mechanism will be more complicated: efficiency requires agents to express their disutility of waiting in the mechanism and managing agents' incentives involve payments. As in the previous section, we will consider the planner's objective to be efficiency, which is defined formally below. 

\begin{observation}
   When values are heterogeneous, Algorithm \ref{alg:1}, \texttt{MINSLACK}, may not be efficient. 
\end{observation}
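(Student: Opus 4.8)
The plan is to prove the observation by exhibiting a single explicit counterexample: a constraint set, an arrival sequence, and two validators with distinct delay costs for which \texttt{MINSLACK} produces strictly lower welfare than an alternative feasible mechanism. Since the claim is merely that \texttt{MINSLACK} \emph{may} fail to be efficient, one concrete instance suffices, and the cleanest route is to engineer a binding constraint that forces exactly one of two simultaneously-waiting validators to exit first.

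\textbf{Construction.} I would take a single constraint, say $\mathcal{C} = \{(\delta, T)\}$ chosen so that in some period only one withdrawal is permitted (for instance, $\texttt{MINSLACK} = 1$ when two requests are waiting). Concretely, suppose two validators submit exit requests in the same period $t$: validator $a$ arrives marginally earlier in the FCFS tie-break order with a low delay cost $c_a$, and validator $b$ arrives just after with a high delay cost $c_b > c_a$. By construction \texttt{MINSLACK} processes the largest feasible prefix of the waiting list, which under FCFS ordering is validator $a$, deferring the urgent validator $b$ to the next period (where, say, the constraint again allows only one exit). The realized welfare loss relative to the efficient order is then proportional to $(c_b - c_a)$ times the induced extra delay, which is strictly positive.

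\textbf{Welfare comparison.} The key step is to compute the utilitarian objective $-\sum c_i \Delta_i$ (ignoring transfers, which net to zero in the welfare sum) under \texttt{MINSLACK} versus under the mechanism that swaps the processing order, sending $b$ first and $a$ second. Because both mechanisms process the same validators in the same two periods and differ only in who waits the extra period, the total delay is conserved but its allocation across cost types differs. Assigning the longer delay to the lower-cost validator $a$ strictly dominates, so efficiency is improved by the swap, establishing that \texttt{MINSLACK}'s FCFS tie-break is not welfare-maximizing whenever $c_a \neq c_b$.

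\textbf{Main obstacle.} The substantive work is not the welfare arithmetic, which is immediate, but ensuring the counterexample is genuinely forced by a \emph{binding} constraint rather than an artifact of insufficient capacity that any reasonable mechanism would also face. I must verify that in the chosen instance the alternative (efficient) schedule is itself feasible with respect to $\mathcal{C}$ — i.e., that swapping the order does not violate any $(\delta_i, T_i)$ bound — so that the comparison is between two admissible mechanisms. This is easy here because reordering within a fixed set of processing periods leaves every $\sum_{\tau} \overline{P}(\tau)$ total unchanged, hence preserves feasibility under all window constraints; I would state this observation explicitly to close the argument cleanly.
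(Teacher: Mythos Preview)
Your counterexample is correct and establishes the observation, but it targets a different source of inefficiency than the paper does. You exploit the FCFS \emph{tie-breaking} inside \texttt{MINSLACK}: with two simultaneously waiting validators and capacity for one, FCFS may serve the low-cost agent first, and swapping the order (same total processed per period, hence feasibility is preserved) strictly improves welfare. This is clean and fully rigorous; indeed it is the inefficiency that \texttt{PRIO-MINSLACK} (Algorithm~\ref{alg:prio-minslack}) is introduced to repair.

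The paper's own justification instead points to the \emph{greedy capacity usage} of \texttt{MINSLACK}: even if the ordering within the waiting list were perfect, processing as many requests as possible today can be suboptimal because a high-cost staker may arrive tomorrow, and the efficient mechanism should sometimes reserve slack for that contingency. That argument is what motivates the MDP formulation (Program~\ref{prog:mdp}) and explains why even \texttt{PRIO-MINSLACK} is not efficient. Your route buys a simpler, fully deterministic counterexample with no appeal to stochastic future arrivals; the paper's route buys the conceptual bridge to the rest of Section~\ref{sec:heterogenous}. Both are valid proofs of the stated observation.

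One minor point worth tightening: you rely on two validators arriving in the same discrete period yet having a well-defined FCFS order (``marginally earlier''). The model as written only timestamps requests by period, so strictly speaking you are assuming an arbitrary but fixed within-period ordering of $W(t)$; this is implicit in the algorithm's use of ``prefix'' and is harmless, but stating it explicitly (or alternatively having $a$ arrive at $t-1$ with the history arranged so it cannot be processed until $t$) would make the construction airtight.
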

Recall that in every period, \texttt{MINSLACK} greedily processes as many withdrawal requests as possible, given the constraints. However, there are unknown future withdrawal requests at the time of processing. With heterogeneous values, it is possible that highly time-sensitive stakers with a high disutility of waiting may arrive in future periods. Suppose the current withdrawal requests have very low time sensitivity (i.e., very low $c$). In that case, the optimal behavior could be to withhold processing withdrawals in this period and reserve this capacity for the future. Intuitively, an efficient withdrawal mechanism must balance between processing withdrawals now while reserving some slack for hypothetical future withdrawals.   

\subsection{Efficient withdrawals under heterogeneity}
This section describes a withdrawal algorithm based on the Vickrey-Clarke-Groves (VCG) mechanism. VCG in such dynamic settings is not novel --- see \cite{parkes2003mdp} or \cite{lavi2000competitive}; it generalizes the second-price sealed-bid auction in static settings and has two desirable properties, namely, (i) it is incentive compatible for each agent to report their cost, $c$, and (ii) the mechanism is constrained-efficient. 

As is standard in mechanism design, we first describe the efficient allocation rule, i.e., the optimal rule for a planner, in a setting where \emph{the planner observes the delay costs of stakers as they arrive}. Then, we describe payment rules that make it \emph{incentive compatible} for stakers to report their values truthfully. 

Since this is a dynamic setting, as alluded to above, the mechanism must have a forecast of future arrivals to decide whether to process withdrawals or to reserve withdrawal slots for future arrivals. In this section, therefore, we assume that there is a known stochastic process behind the withdrawals. The number of withdrawal requests in each period is randomly distributed (for example, this may be the Poisson distribution with known parameter $\lambda$). Each withdrawal request has a type that is an i.i.d. draw according to a known probability distribution on $\Re_+$. 

\subsubsection{The Efficient Allocation Rule}
For now, suppose each agent truthfully states, at the time of joining the waiting list, their private cost, $c$. Each element of the waiting list is now a 4-tuple $(v,s,t',c)$. Modulo this change, however, the system can be described just as in Section \ref{sec:model}.

Given that some set of withdrawal requests $P(t)$ is processed in period $t$ from a waiting list of $W(t)$, the system collects a penalty (net disutility) of: 
\[ \texttt{Penalty} = \sum_{(v,s,t',c) \in W(t) \setminus P(t)} sc .\]
In other words, the planner in period $t$ collects a penalty equal to the disutility cost of every staker in the waiting list whose withdrawal is not processed. 
As before, the planner faces some constraints $\mathcal{C}$ on exits. The system aims to minimize expected discounted penalties over feasible exit plans, where $\rho \in [0,1]$ is the planner's discount rate.

This is a dynamic program where the state of the problem at the beginning of period $t$ is $(S(t), W(t), H(t-1))$. We can recursively define the value function of the planner as follows:
\begin{align} 
V(S(t), W(t), H(t-1)) &\equiv \label{prog:mdp}\\ \nonumber
 \min_{P(t)} &\bigg( \sum_{(v,s,t', c) \in W(t) \setminus P(t)} sc + \delta \mathbb{E}[V(S(t+1), W(t)  \setminus P(t) \\ \nonumber
 & \qquad \qquad \qquad \qquad \cup R(t+1),H(t) \cup P(t)]\bigg),\\ \nonumber
&\text{s.t. } P(t) \subseteq W(t), \\ \nonumber
    &\hphantom{\text{s.t. }} P(t) \text{ feasible wrt } \mathcal{C}. \nonumber
\end{align}
Here, expectations are taken over the next period withdrawal requests $R(t+1)$: both the number of withdrawal requests and the corresponding waiting disutility is unknown at period $t.$

This framing is an infinite horizon Markov Decision Problem (MDP). Given the previous history, there is a maximum number of feasible withdrawals in every period. For any withdrawal processed from the waiting list, it is intuitive that the planner will remove the ones with the highest disutility of waiting first. However, as described above, the marginal value of holding onto a withdrawal slot can exceed the penalty of making a current staker on the list wait an extra period. Of course, the precise details depend on the arrival process and the system's current state. The algorithm is described in Program~\ref{prog:mdp}.

\begin{algorithm}
\caption{\texttt{OPTIMAL} \label{alg:optimal}}
\begin{algorithmic}[1]
\STATE $\ldots$ \COMMENT{same as \texttt{MINSLACK}}
\FOR{each period $t \geq 1$}
\STATE $W(t) \gets W(t-1) \setminus P(t-1) \cup R(t).$
\STATE \textcolor{olive}{P(t) $\gets$ Solution of Program~\ref{prog:mdp}}
\STATE $\overline{P}(t)$ $\gets$ Total withdrawn in P(t)
\STATE H(t+1) $\gets$ H(t) $\cup$ P(t) 
\STATE \textbf{Update:} S(v,t) based on P(t), E(t).
\ENDFOR  
\end{algorithmic}
\end{algorithm}

\texttt{OPTIMAL} is nearly identical to \texttt{MINSLACK}; it only replaces the process for calculating the set of withdrawals to process in a current period, $P(t)$ (shown in brown text in Algorithm~\ref{alg:optimal}). The optimization problem in Program~\ref{prog:mdp} must be solved to determine the policy of how many withdrawals to process at each period.

\begin{algorithm}
\caption{\texttt{PRIO-MINSLACK} \label{alg:prio-minslack}}
\begin{algorithmic}[1]
\STATE $\ldots$ \COMMENT{same as \texttt{MINSLACK}}
\STATE \textcolor{blue}{Sort W(t) in decreasing order of waiting disutility.}
\FOR{each constraint $i \leq k$}
\STATE \texttt{SLACK}$_i \gets \delta_i \overline{S}(t-T_i) - \sum_{\tau = t-T_i+1}^{t-1} \overline{P}(\tau).$
\ENDFOR 
\STATE \texttt{MINSLACK} $\gets \min \{$SLACK$_i: 1 \leq i \leq k\}$.
\STATE $\ldots$ \COMMENT{same as \texttt{MINSLACK}}
\end{algorithmic}
\end{algorithm}

Another candidate withdrawal algorithm, which we refer to as \texttt{PRIO-MINSLACK}, modifies \texttt{MINSLACK} to process withdrawals in order of priority fees. Algorithm~\ref{alg:prio-minslack} represents this one-line change in blue text.

\subsubsection{Pricing Rule}
So far, we have described the problem as an optimization problem where the planner \emph{knows} the disutility from waiting suffered by the stakers in the queue. These are private, and there must be an incentive for stakers to report truthfully. Achieving this is straightforward (albeit computationally inefficient): every staker withdrawn in a period $t$ should pay the expected delay costs imposed on the system by their presence. Existing theorems (see \cite{parkes2003mdp}, \cite{bergemann2010dynamic}) show that such a pricing rule results in a Bayes-Nash equilibrium, where each buyer reports their values truthfully. 

\subsubsection{Optimal policy}\label{sub:optimal}

If new withdrawal arrival and value distributions are known, we can calculate the optimal withdrawal policy by solving the resulting MDP associated with Program~\ref{prog:mdp}.

\begin{figure}
    \centering
    \includegraphics[width=1\textwidth]{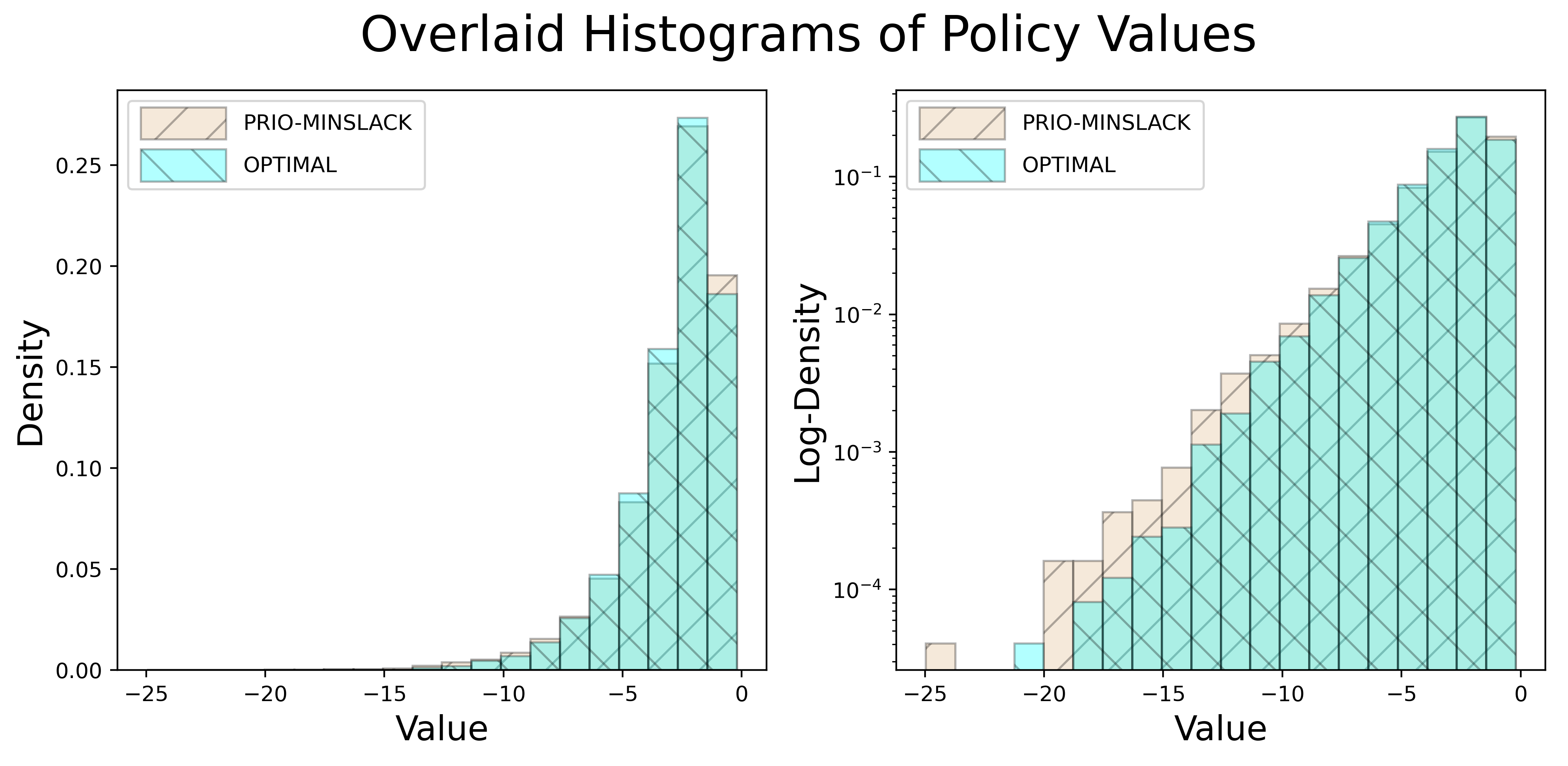}
    \caption{Performance comparison of \texttt{PRIO-MINSLACK} and \texttt{OPTIMAL} over 10,000 samples calculating the discounted reward following each policy from the initial state $s_0= [0,0,0,0,0,0]$ for 350 steps with a discount factor of $0.9$. The density of each histogram shows the probability a given trial ends in that range of values. When examining the raw density, the performance seems comparable, but the Log-Density plot demonstrates that the long tail performance of \texttt{PRIO-MINSLACK} is significantly worse than \texttt{OPTIMAL}. Intuitively, \texttt{PRIO-MINSLACK} is more of a ``gambler'' – the algorithm takes big risks by greedily processing as fast as possible. These risks are rewarded in the median case but occasionally have large disutilities by burning the capacity on low-value withdrawals. See Table~\ref{tab:optimalcomp} for more numerical comparisons between the two algorithms under different parameterizations.}
    \label{fig:histo}
\end{figure}

\begingroup
\setlength{\tabcolsep}{10pt}
\renewcommand{\arraystretch}{1.05}
\begin{table}
    \centering
    \begin{NiceTabular}{|c|c|c|c|c|c|}
    \hline & \\[-0.9em]
    \textbf{Algorithm} & \textbf{Arrival dist.} & \textbf{Value dist.} & \textbf{Discount} & \textbf{Performance}\\[0.2em]
    \hline
    \hline & \\[-0.9em]

    \texttt{OPTIMAL} &  & & \multirow{2}{*}{$0.85$}  & $-2.374$ \\
    \texttt{PRIO-MINSLACK} & & & & $-2.413$ \\
    
    \cmidrule{1-1}\cmidrule{4-5}
    
    \texttt{OPTIMAL} & $Y \sim [0, 1, 5]$ &  $X \sim [1, 10]$ & \multirow{2}{*}{$0.9$}  & $-2.933$ \\
    \texttt{PRIO-MINSLACK} &  $\textit{ w.p. } [0.5, 0.4, 0.1]$ & $\textit{ w.p. } [0.9, 0.1]$ & & $-2.982$ \\
    
    \cmidrule{1-1}\cmidrule{4-5}

    \texttt{OPTIMAL} &  &  & \multirow{2}{*}{$0.95$}  & $-3.964$ \\
    \texttt{PRIO-MINSLACK} & & & & $-3.999$ \\[0.2em]

    \hline  & \\[-0.9em]

    \texttt{OPTIMAL} & & $X \sim [1, 5]$ &  & $-2.428$ \\
    \texttt{PRIO-MINSLACK} & & $\textit{ w.p. } [0.9, 0.1]$ & & $-2.422$ \\
    
    \cmidrule{1-1}\cmidrule{3-3}\cmidrule{5-5}
    \texttt{OPTIMAL} & $Y \sim [0, 1, 5]$ & $X \sim [1, 10]$ & \multirow{2}{*}{$0.9$}  & $-2.959$ \\
    \texttt{PRIO-MINSLACK}  & $\textit{ w.p. } [0.5, 0.4, 0.1]$ & $\textit{ w.p. } [0.9, 0.1]$ & & $-3.005$\\
    \cmidrule{1-1}\cmidrule{3-3}\cmidrule{5-5}

    \texttt{OPTIMAL} &  & $X \sim [1, 20]$ &  & $-3.902$ \\
    \texttt{PRIO-MINSLACK} & & $\textit{ w.p. } [0.9, 0.1]$ & & $ -4.151$ \\[0.2em]

    \hline  & \\[-0.9em]
    
    \texttt{OPTIMAL} & $Y \sim [0, 1, 2]$ & &  & $-1.637$ \\
    \texttt{PRIO-MINSLACK} & $\textit{ w.p. } [0.4, 0.4, 0.2]$ & & & $-1.638$ \\
    
    \cmidrule{1-2}\cmidrule{5-5}

    \texttt{OPTIMAL} & $Y \sim [0, 1, 5]$ & $X \sim [1, 10]$ & \multirow{2}{*}{$0.9$}  & $-2.925$ \\
    \texttt{PRIO-MINSLACK} & $\textit{ w.p. } [0.5, 0.4, 0.1]$ & $\textit{ w.p. } [0.9, 0.1]$ & & $-2.969$ \\
    
    \cmidrule{1-2}\cmidrule{5-5}

    \texttt{OPTIMAL} & $Y \sim [0, 1, 10]$  & &  & $-3.610$ \\
    \texttt{PRIO-MINSLACK} & $\textit{ w.p. } [0.6, 0.35, 0.05]$ &  & & $-3.620$ \\[0.2em]

    \hline 
    \end{NiceTabular}
    
    \caption{Performance over 10,000 simulations for \texttt{OPTIMAL} and \texttt{PRIO-MINSLACK} under different configurations of arrival distributions ($Y$), value distributions ($X$), and discount factors. The performance metric is the discounted value of the rewards starting in the initial state $[0,0,0,0,0,0]$; higher values (smaller disutility) are better. \textbf{Top three pairs:} \textit{varying discount factors}. We use $n=225, 350, 700$ for simulation steps for the discount factors of $\gamma =0.85,0.90, 0.95$ respectively (each selected so that the end of the trial has a weighting of $\approx 10^{-16}$). \textbf{Middle three pairs:} \textit{varying the value distribution}. \textbf{Bottom three pairs:} \textit{varying the arrival distribution}.}
    \label{tab:optimalcomp}
\end{table}
\endgroup

\textit{A tractable instantiation.} Consider the withdrawal problem with a single constraint of $(t_0, \bar{S}\delta_0) = (5,5)$ (no more than five withdrawals are allowed over five time periods).%
\footnote{For our numerical exercises, for simplicity, we model the constraints as corresponding to an absolute number of validators that can withdraw over some window of periods.} 
Let the number of new withdrawals per period be distributed as $Y \sim \{0, 1, 5\}$ $\textit{ w.p. } \{0.5, 0.4, 0.1\}$ and the value of these distributions be distributed as $X \sim \{1, 10\} \\ \textit{ w.p. } \{0.9, 0.1\}$. We need only two values to represent the state of pending withdrawals, $W(t)$. Let $w_{\ell}$ and $w_{h}$ denote the number of pending ``low'' ($c=1$) and ``high'' ($c=10$) withdrawals, respectively. Further, let $h_{-1},h_{-2},h_{-3},h_{-4}$ denote the history of withdrawals processed (called $H(t-1)$ above) in each of the last four periods (with a $(5,5)$ constraint, this is the extent of the history that we must consider when deciding what withdrawals to process in this period). This leads the definition of each state $$s = [w_{\ell}, w_h, h_{-1},h_{-2},h_{-3},h_{-4}] \in S.$$ The action space in this MDP is $A = \{0,1,2,3,4,5\}$, where the action $a_i$ is legal if $\sum h_{j} + a_i \leq 5$. To limit the size of the action space, we only consider states where $w_\ell, w_h < 10$. Even with this extremely reduced setup, there are still $|A| \times |S| \times |S| = 6 \cdot 15246^2=  1394643096$ probabilities and rewards to encode. Nevertheless, this is feasible since the transition and reward matrices are sparse. 

Using value iteration, we numerically solve for the optimal policy, which determines, ``given a state, how many withdrawals should we process during this period.''  We now compare the performance of \texttt{OPTIMAL} (Algorithm~\ref{alg:optimal} (under an assumed discount factor of $0.9$)) and \texttt{PRIO-MINSLACK} (Algorithm~\ref{alg:prio-minslack}).\footnote{Table \ref{tab:optimalcomp} considers other discount factors, arrival processes, and distributions of value.} Recall that \texttt{PRIO-MINSLACK} is a much simpler heuristic, where it looks at the history and takes action $a_i = 5 - \sum h_j$. While this works well generally, there are situations where it is ``overly aggressive'' and can result in large disutitlities. For example, consider the state.
\begin{align*}
     [10, 0, 0, 0, 0, 0] \implies \text{10 pending lows, 0 pending highs, empty history}.
\end{align*}
In this situation, \texttt{PRIO-MINSLACK} observes that it can process five low withdrawals immediately and does so ($a_i=5$). The optimal policy, however, chooses $a_i=3$ instead. By processing five withdrawals in a single period, \texttt{PRIO-MINSLACK} forces a state where no more withdrawals are possible for the following four periods. Using the available capacity, the mechanism runs the risk of a high withdrawal arriving and needing to wait, resulting in a large disutility. The optimal algorithm is ``more cautious'' by reserving two withdrawal slots for the future, protecting for the possibility that a high-value withdrawal comes in the following few periods. Figure~\ref{fig:histo} shows a performance comparison of \texttt{PRIO-MINSLACK} and \texttt{OPTIMAL}.
There are ten states  (i.e., configurations of the current queue and history of withdrawals) in which the action dictated by the optimal policy differs from \texttt{PRIO-MINSLACK} by two (e.g., optimal processes two fewer withdrawals than \texttt{PRIO-MINSLACK}) and 338 states in which the optimal action differs from \texttt{PRIO-MINSLACK} by one. Table~\ref{tab:optimalcomp} compares the performance of \texttt{OPTIMAL} and \texttt{PRIO-MINSLACK} under a few variations of (i) arrival distributions, (ii) value distributions, and (iii) discount factors from simulating the two policies.

\subsection{Practical considerations for the heterogeneous value setting}

The previous section outlines dynamic VCG, the optimal withdrawal mechanism given known stationary arrival and value distributions. In practice, the social planner may not know these distributions, and further, the expected number of withdrawals or urgency of the demand for liquidity could change over time. Beyond this, implementing dynamic VCG would require solving the dynamic program outlined in Program~\ref{prog:mdp} and holding funds in escrow to execute the VCG payment rule---both of which seem possible on paper but present significant engineering challenges. 

\begin{algorithm}
\caption{$\alpha$-\texttt{MINSLACK} \label{alg:prop-minslack}}
\begin{algorithmic}[1]
\STATE $\ldots$ \COMMENT{same as \texttt{MINSLACK}}
\STATE \textcolor{blue}{Sort W(t) in decreasing order of waiting disutility.}
\FOR{each constraint $i \leq k$}
\STATE \texttt{SLACK}$_i \gets \delta_i \overline{S}(t-T_i) - \sum_{\tau = t-T_i+1}^{t-1} \overline{P}(\tau).$
\ENDFOR 
\STATE \texttt{MINSLACK} $\gets \min \{$\texttt{SLACK}$_i: 1 \leq i \leq k\}$.
\STATE \textcolor{red}{P(t) $\gets$ Largest prefix of W(t) such that total withdrawn $\leq \alpha\cdot$  \texttt{MINSLACK}}
\STATE $\ldots$ \COMMENT{same as \texttt{MINSLACK}}
\end{algorithmic}
\end{algorithm}

\begingroup
\setlength{\tabcolsep}{10pt}
\renewcommand{\arraystretch}{1.05}
\begin{table}
    \centering
    \begin{NiceTabular}{|c|c|c|c|}
    \hline & \\[-0.9em]
    \textbf{Algorithm} & \textbf{Arrival dist.} & \textbf{Value dist.} & \textbf{Performance}\\[0.2em]
    \hline
    \hline & \\[-0.9em]
    \texttt{CONSTANT} (1) & & \multirow{4}{*}{$X \sim \text{Uniform}(0,1)$} & $-5.768$ \\
    \texttt{MINSLACK} & & & $-5.464$ \\
    \texttt{PRIO-MINSLACK} & & & $-2.019 $ \\
    \texttt{$\alpha$-MINSLACK} ($\alpha = 0.9$) & & & $-2.002 $ \\
    
    \cmidrule{1-1}\cmidrule{3-4}

    \texttt{CONSTANT} (1) & \multirow{3}{*}{$Y \sim [0, 1, 5]$} &  \multirow{4}{*}{$X \sim \text{Exp}(0.1)$} & $-12.249$ \\
    \texttt{MINSLACK} & \multirow{3}{*}{$\textit{w.p. } [0.5, 0.4, 0.1]$} &  & $-11.648$ \\
    \texttt{PRIO-MINSLACK} &  & & $-2.951$ \\
    \texttt{$\alpha$-MINSLACK} ($\alpha = 0.9$) & & & $-2.986$ \\
    
    \cmidrule{1-1}\cmidrule{3-4}

    \texttt{CONSTANT} (1) &  & \multirow{4}{*}{$X \sim \text{Pareto}(2,5)$} & $-114.913$ \\
    \texttt{MINSLACK} &  & & $-109.354$ \\
    \texttt{PRIO-MINSLACK} & & & $-67.687$ \\
    \texttt{$\alpha$-MINSLACK} ($\alpha = 0.9$) & & & $-63.070$ \\[0.2em]
    
    \hline
    \end{NiceTabular}
    
    \caption{Numerical results for algorithm performance under a fixed withdrawal arrival distribution and three different value distributions. The performance metric measures the average disutility over the withdrawals and thus should be minimized (to maximize the utility). We calculate the mean disutility over ten independent samples of 10,000 steps each, with the first 1,000 steps of each sample discarded to allow the system to settle into a steady state. The single constraint was set as $(\delta, T)= (5,5)$: ``a maximum of five withdrawals may be processed over five periods.'' \texttt{CONSTANT} processes one withdrawal per time step. \texttt{MINSLACK} and \texttt{PRIO-MINSLACK} follow the descriptions in Algorithms~\ref{alg:1} and \ref{alg:prio-minslack} respectively. For \texttt{$\alpha$-MINSLACK} (Algorithm~\ref{alg:prop-minslack}), we use $\alpha=0.9$, which produces the following mapping for calculating how much slack to consume in a given time slot $[0,1,2,3,4,5] \mapsto [0,1,2,3,4,4]$. We can describe this simply as: ``If the slack is exactly 5, use only four (reserving one for a potentially high-value arrival). If the slack is less than 5, use it entirely.'' The arrival distribution mimics occasional bursts of withdrawal requests while maintaining an expected value $E[Y] =0.9$, less than the average capacity of one derived by the $(5,5)$ constraint. The withdrawal values were sampled from Uniform, Exponential, and Pareto distributions to demonstrate that under some conditions, \texttt{$\alpha$-MINSLACK} can outperform \texttt{PRIO-MINSLACK}. In all cases, \texttt{CONSTANT} and \texttt{MINSLACK} perform far worse than the \texttt{PRIO-} and \texttt{$\alpha$-} variants.}
    \label{tab:numerical}
\end{table}
\endgroup

\texttt{PRIO-MINSLACK} is much simpler to implement, but may suffer under value heterogeneity because it is too eager to process withdrawals. The problem arises when \texttt{PRIO-MINSLACK} receives a burst of low-value withdrawals, in which case it consumes all the available capacity on the low-priority withdrawals and leaves potentially higher-value incoming withdrawals pending longer. These bursts induce a natural question: can we modify \texttt{PRIO-MINSLACK} to be slightly more conservative with its remaining capacity while preserving its simplicity? One solution is to modify \texttt{PRIO-MINSLACK} to consume only an $\alpha \in (0,1]$ proportion\footnote{The interval is left-open because $\alpha=0$ implies no withdrawals are ever processed.} of the \texttt{SLACK} available at each period.

Algorithm~\ref{alg:prop-minslack}, which we call \texttt{$\alpha$-MINSLACK}, makes the one-line modification (shown in red) to \texttt{PRIO-MINSLACK} by scaling the amount of processed withdrawals by $\alpha$. By tuning $\alpha$, we can make \texttt{$\alpha$-MINSLACK} more or less aggressive in how much withdrawal capacity it uses now versus saving.
At $\alpha=1$, we reduce to the ``maximally aggressive'' version (\texttt{PRIO-MINSLACK}). In contrast, as  $\alpha\to 0$, \texttt{$\alpha$-MINSLACK} becomes increasingly conservative. 
The outcome here is that the slack continues to build up across the constraints, and you end up processing at the rate $\alpha \cdot \delta_i / T_i$ per-unit time, where $(\delta_i, T_i) \in \mathcal{C} \text{ s.t., } \delta_i/T_i = \min_j \delta_j/T_j$. In other words, process at a constant rate proportional to the ``most restrictive'' constraint in $\mathcal{C}$. More moderate values, e.g.,  $\alpha = 0.5$, present a more balanced version of \texttt{$\alpha$-MINSLACK} where the algorithm functions on the heuristic of ``using half of the remaining slack at each period.''

While \texttt{$\alpha$-MINSLACK} is not necessarily optimal, it does eliminate the need for arrival distribution knowledge, which the optimal mechanisms rely on. Further, its simplicity makes it much more feasible for an actual production system. We justify this statement by numerically comparing the performance of various mechanisms under different withdrawal distributions. Note that we can expand the set of value distributions compared to the optimal analysis because we are no longer constructing the entire state space of the MDP.

Table~\ref{tab:numerical} compares the performance of four different algorithms across a constant arrival distribution and under three different value distributions. These results demonstrate that the \texttt{PRIO-} and \texttt{$\alpha$-} versions of \texttt{MINSLACK} far outperform either the \texttt{CONSTANT} mechanism or regular \texttt{MINSLACK} (which serves as an FCFS-queue rather than a priority queue based on the value of the withdrawal). These results motivate that, under some distributions, \texttt{$\alpha$-MINSLACK} may be preferable to \texttt{PRIO-MINSLACK}. Further work could be done to study adaptive algorithms that aim to learn the optimal value of $\alpha$ in an on-line fashion. Again, these heuristic rules for determining the withdrawal policy of the staking system are far more straightforward to construct and implement than the optimal versions described in Section~\ref{sub:optimal}.

\section{Theory and Practice}\label{sec:theory-and-practice}\label{sec:limit}
\noindent\textbf{Why limit withdrawals in the first place? A thought experiment.} Assume that withdrawals are not limited. An attacker, Eve, accumulates $1/3$ of the total stake in the PoS mechanism and invests heavily in networking infrastructure. Eve contacts Alice to inquire about buying a Tesla Cybertruck©. Alice, who is feeling both cyber- and cypherpunk enough to accept \texttt{ETH} for the transaction, sees \texttt{txn 0xcb} on Etherscan as finalized, giving her confidence to hand the (car) keys to Eve. 
From Alice's perspective, the settlement assurance of $1/3$ of all staked \texttt{ETH} ($>33$ billion USD as of May 2024) is more than sufficient economic security for her transaction. However, Eve using her networking prowess, Eve had tricked the honest validators into finalizing two conflicting blocks, one which included \texttt{txn 0xcb} and another that didn't by partitioning the honest validators into two separate p2p groups and sharing conflicting attestations with each group. If withdrawals are not limited, she can fully withdraw her stake from both chains by the time honest validators reconnect (once Eve's network-level attack ends) and try to slash her. Alice has no Telsa Cybertruck© nor the \texttt{ETH} originally sent in \texttt{txn 0xcb}.

In light of this, blockchains place limits on withdrawals. However, as described below, there is substantial variation in the limits placed and the withdrawal procedure, with little systematic study. 


\subsection{Accountable Safety and Limiting Withdrawals}\label{sec:limiting}
We begin with the following simple observation. 
\begin{observation}
    The accountable safety of a finalized block \textbf{decreases} as time passes because the stake participating in the finalization of the block can withdraw from the system.
\end{observation}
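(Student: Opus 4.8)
The plan is to treat ``accountable safety of a finalized block'' as a scalar attached to the block and to show it is a weakly decreasing function of time under the laws of motion of Section~\ref{sec:model}. First I would make the informal phrase precise. Fix a block $B$ finalized at some period $t_0$, and let $\mathcal{V}_B \subseteq V$ be the set of validators whose attestations were used in its finalization; by the quorum rule these carry aggregate stake at least $q\,\overline{S}(t_0)$ for the finalization threshold $q$ (e.g., $q=2/3$). The working definition of accountable safety at a later period $t \geq t_0$ is the stake that can provably be slashed should a conflicting block be finalized---namely the stake of the original attesters in $\mathcal{V}_B$ still held in escrow at $t$,
\[
A_B(t) \;=\; \sum_{v \in \mathcal{V}_B} S(v,t).
\]
The first real step is to justify this definition via the standard overlap argument: any conflicting finalization requires a second quorum, and accountable safety (per \cite{buterin2017casper}) attributes the fault to validators who signed both conflicting attestations. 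A signer can only be punished if the stake backing its attestation remains in escrow, so the attributable, slashable mass at time $t$ is exactly the surviving portion of $\mathcal{V}_B$.

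Given this definition, the core argument is immediate. Each indicator $S(v,t)$ takes values in $\{0,1\}$, and a processed withdrawal sends the stake backing $v$'s original attestation from $1$ to $0$ permanently. Hence every term of $A_B(\cdot)$ is non-increasing, so $A_B(t)$ is non-increasing in $t$, and it drops strictly in any period in which some $v \in \mathcal{V}_B$ is processed out of the queue. This is exactly the Observation: as members of $\mathcal{V}_B$ exit, the slashable stake backing $B$ can only fall, so $B$'s accountable safety decreases. I would close by connecting this to the thought experiment and to \cite{buterin2014howilearned}: once $A_B(t)$ falls below the margin needed to deter a conflicting finalization, a nothing-at-stake / long-range attack on $B$ becomes effectively costless, which is precisely the erosion that the rate limits of Section~\ref{sec:limiting} are designed to bound.

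The main obstacle is conceptual rather than computational: pinning down a notion of accountable safety under which the monotonicity is both true and meaningful. The delicate point is re-entry, since the model leaves entry unrestricted and a validator who exits could rejoin. I would handle this by emphasizing that accountability attaches to the \emph{specific escrowed stake} that signed $B$'s attestation: once that stake is withdrawn it is no longer slashable for the original signature, and any fresh stake the validator later deposits is a distinct escrow that cannot be retroactively bound to the conflicting attestation. Under this reading $\mathcal{V}_B$ and its associated escrow are fixed at $t_0$, withdrawals only remove mass, and the stated monotonicity is exact.
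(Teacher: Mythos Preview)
Your proposal is correct, but note that the paper does not actually prove this statement: it is labeled an \emph{Observation} and is asserted without argument, with the surrounding text immediately moving on to its implications (the choice of $\mathcal{D}$ and $\delta$). The justification the paper implicitly relies on is simply the informal reading of accountable safety from \cite{buterin2017casper}---stake that has exited cannot be slashed---together with the Eve/Alice thought experiment that precedes the section. Your write-up goes well beyond this: you give a precise quantity $A_B(t)=\sum_{v\in\mathcal{V}_B} S(v,t)$, reduce the claim to termwise monotonicity of the indicators $S(v,\cdot)$, and explicitly dispatch the re-entry subtlety by tying accountability to the specific escrow that signed the attestation rather than to the validator identity. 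That last point is the only place where any real content lives, and your treatment of it is the right one; the rest is, as you note, immediate once the definition is fixed. So your approach is not so much a different route as a formalization of something the paper leaves at the level of intuition.
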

\noindent This (rather counter-intuitive) fact means protocol designers must decide: ``How quickly should validators be able to withdraw their stake from the system?'' Let $\mathcal{D}$ denote the `maximum-tolerable decay' in the accountable safety of a finalized block. For example, if $\mathcal{D} = 1/6$, then a finalized block may have accountable safety (in terms of proportion of the total stake that is slashable in case the transaction history changes) of $1/3-\mathcal{D} = 1/6$. The security decay modifies the statement to, ``any transaction in a finalized block will have accountable safety of at least $1/6$ of all stake.'' This remains incomplete because over a sufficiently long time horizon, with withdrawals enabled, more than $\mathcal{D}$ stake may be removed from the system. 
Thus, we define an amount of time, denoted $\delta$, over which the stake withdrawn must not exceed $\mathcal{D}$. This period can serve multiple purposes. One such usage is the weak-subjectivity period \citep{buterin2014howilearned}, where the delay is an upper bound on the communication delay between all honest parties in the partially-synchronous protocol; this value is $\mathcal{O}(weeks)$ to account for the natural overhead incurred when social coordination is required to come to consensus.\footnote{By ignoring any blocks published prior to the weak-subjectivity checkpoint, validators can also eliminate the risk of long-range attacks (in practice, validators treat their latest finalized block as a `genesis` or irreversible block by simply rejecting any block that conflicts with it).} Other constraints might be over much shorter time horizons, e.g., $\mathcal{O}(minutes)$, to ensure a bound on the rate at which the economic security of a block changes in short windows. 
Thus, the accountable safety of a Proof-of-Stake mechanism parameterized by $\mathcal{D}$ and $\delta$ is ``any block finalized more than $\delta$ time ago is immutable (only social consensus could reverse it), and any block finalized within the past $\delta$ time has accountable safety of at least $1/3-\mathcal{D}$.''

\subsection{Ethereum}\label{sec:eth} Withdrawals in Ethereum Proof-of-Stake were fully activated in the Shanghai/Capella Hardfork\footnote{\url{https://ethereum.org/en/history/\#shapella}} on April 12, 2023. While the full withdrawal process is quite involved, we dig into the details to demonstrate how much engineering can shape the withdrawal mechanisms in use today. Figure~\ref{fig:ethereum-withdrawals} demonstrates the full flow of an Ethereum withdrawal, which is split into three distinct phases.

\begin{figure}
    \centering
    \includegraphics[width=\textwidth]{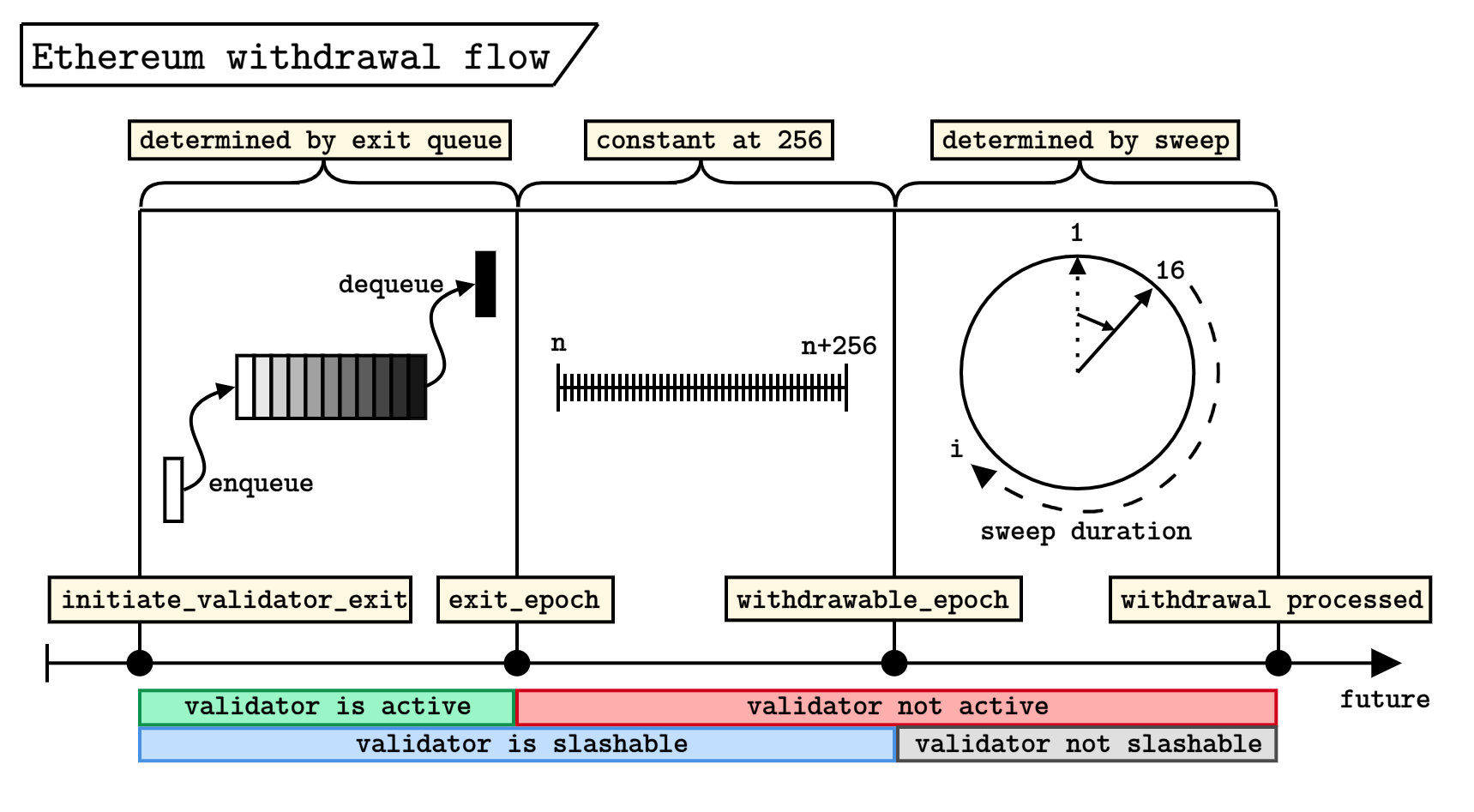}
    \caption{The withdrawal flow for Ethereum validators. Each phase has differing lengths and validator properties. The \textbf{top row} of tan labels demonstrate what determines the length of each phase. The \textbf{middle row} of tan labels annotate the timeline of events as described in the \cite{capellaspec}. The \textbf{bottom row} of colored labels indicate the activity and slashability of the validator over time.}
    \label{fig:ethereum-withdrawals}
\end{figure}

\textit{Phase 1: Exit queue –} When an Ethereum validator wants to withdraw their 32 \texttt{ETH} from the consensus mechanism, they trigger a ``Voluntary Exit'' \citep{phase0spec}. This process sets the validator's \texttt{exit\_epoch} based on the rate-limited first-come-first-served exit queue; during each epoch, at most $\min(4, \lfloor \# \text{ validators} / 2^{16} \rfloor$) are processed \citep{phase0spec} (though this was changed in EIP-7514 to cap the churn limit at $8$ validators per-epoch \cite{eip7514}, making the new function $\max(8, \min(4, \lfloor \# \text{ validators} / 2^{16} \rfloor))$). The \texttt{CHURN\_LIMIT\_QUOTIENT} $=2^{16}$ was selected\footnote{Powers of two common for specification constants due to their compact binary representation.} according to the rough heuristic that it should take approximately one month for $10\%$ of the stake to exit (or equivalently, about 100 days for $33\%$ of the stake to exit) \cite{annotated}.
For the entire time a validator is in the exit queue (this phase), they are both `active' (meaning they must continue performing their consensus duties) and `slashable' (meaning their stake is still accountable for their behavior). Keeping the validator active while in the exit queue minimizes the economic cost of a very long exit queue because they continue earning rewards \citep{buterin2019entriesnotwithdrawals}.  

\textit{Phase 2: Withdrawalability delay –} Once the validator's \texttt{exit\_epoch} has passed, they incur a constant delay of 256 epochs (27 hours) before their \texttt{withdrawable\_epoch} \citep{phase0spec}. This fixed delay is a significant safety buffer to provide ample time for the protocol to include any slashing proof on chain. During this time, the validator is no longer active (and thus not earning any rewards), but they remain slashable (to avoid committing a slashing violation immediately before the withdrawal). The enforcement of this delay ensures that, even if the exit queue is empty, there is a period where the validator's stake is still accountable for their actions.

\textit{Phase 3: Validator sweep –} Once past the validator's \texttt{withdrawable\_epoch}, the function \texttt{is\_fully\_withdrawable\_validator} returns \texttt{true} indicating that the withdrawal delay has passed and the validator is no longer slashable \citep{capellaspec}. The last delay comes from the amount of time it takes for the actual withdrawal requests to send the \texttt{ETH} to the corresponding withdrawal address. All withdrawals are processed by looping through the validator set in order of validator index. This validator ``sweep'' can only process 16 withdrawals per block, corresponding to 8.8 days to iterate through the entire validator set (thus a 4.4-day additional delay on average).\footnote{\url{https://www.validatorqueue.com/}} This 8.8-day delay is present regardless of the length of the exit queues because ``full'' withdrawals (where a validator wants to leave the consensus layer altogether) are inter-mixed with ``partial'' withdrawals (where a small amount of validator rewards are involuntarily swept from each validator). While the original specification implemented the queues directly into the protocol, \cite{potuzpr} changes this only to store the validator index and perform the sweep by mixing partial and full withdrawals. 

This withdrawal mechanism is quite complex; the minimum time to exit the system is 27 hours. Due to the validator sweep, if the validator doesn't strategically time their exit request, the withdrawal will take over 5.5 days on average to fully clear, even if the exit queue is empty. This complexity highlights how engineering decisions can inform the exit queue mechanism design. Beyond Ethereum, there are many additional staking systems, though their withdrawal mechanisms are much simpler and thus presented in Table~\ref{tab:proof-of-stake-table} \& Table~\ref{tab:applayer-table}.

\begingroup
\setlength{\tabcolsep}{10pt}
\renewcommand{\arraystretch}{1.2}
\begin{table}
    \centering
    \begin{tabularx}{\linewidth}{|c||c|L|L|}
    \hline
    \textbf{Protocol} & \textbf{Staking purpose} & \textbf{Withdrawal mechanism} & \textbf{One-line analysis}\\
    \hline
    \hline
    \textit{Ethereum} & Consensus safety & Rate-limited FCFS queue with minimum duration. & Aims to be fast in the average case, but partial withdrawals induce high-variance delay.\\
    \hline
    \textit{Cosmos} & Consensus safety & Fixed 21-day unbonding period. & Simple but inefficient.\\
    \hline
    \textit{Solana} & Sybil resistance & All deactivations happen at epoch boundaries. A maximum of $25\%$ of stake can deactivate at any given epoch boundary. & With no slashing, stake does not provide accountable safety to the protocol. Limiting withdrawals ensures the entire stake cannot exit in a single epoch. \\
    \hline
    \textit{Cardano} & Sybil resistance  & Immediate withdrawals. & With no slashing, stake does not provide accountable safety to the protocol. Withdrawals are immediately processed. \\
    \hline
    \textit{Polygon} & Consensus safety & Fixed $\approx 40$ hour unbonding period. &  Simple but inefficient. It benefits from the fact that, as a sidechain, state updates are posted to Ethereum and are thus immutable – allowing for a relatively shorter fixed duration.\\
    \hline
    \textit{Polkadot} & Consensus safety & Fixed 28-day unbonding period. & Simple but inefficient. \\
    \hline
    \textit{Avalanche} & Sybil resistance & Validators dictate the duration of their staking before becoming active. The minimum duration is two weeks. After time has elapsed, the stake is immediately withdrawn. & With no slashing, stake does not provide accountable safety to the protocol. Withdrawals are immediately processed.  \\
    \hline
    \end{tabularx}
    \caption{Comparing staking and withdrawal mechanisms across L1 protocols and sidechains.}
    \label{tab:proof-of-stake-table}
\end{table}
\endgroup

\subsection{Other Proof-of-Stake Blockchains}
Table~\ref{tab:proof-of-stake-table} compares several other blockchain protocols and how they handle withdrawals. Ethereum is the only protocol that implements a dynamic queue, and in this regard, Ethereum takes on additional complexity to improve the efficiency of the withdrawal mechanism. Cosmos, Polygon, and Polkadot each implement the simple, fixed-duration withdrawal mechanism with delays of 21, 2, and 28 days, respectively. This mechanism is simple and easy to reason about. Still, it is much less efficient because each withdrawal takes the maximal amount of time regardless of the history of the mechanism \citep{cosmosunbonding, polygon, polkadocs}. Solana, Cardano, and Avalanche do not have in-protocol slashing, so staking serves only as an anti-Sybil mechanism in their systems; the stake can exit the system without a rate-limiting step and not change their security model \citep{solanadeactivate,cardanoundelegation, avalanche}.

\begingroup
\setlength{\tabcolsep}{10pt}
\renewcommand{\arraystretch}{1.3}
\begin{table}
    \centering
    \begin{tabularx}{\linewidth}{|c||L|L|L|}
    \hline
    \textbf{Protocol} & \textbf{Staking purpose} & \textbf{Withdrawal mechanism} & \textbf{One-line analysis}\\
    \hline
    \hline
    \textit{EigenLayer} & Economic security guarantees & Fixed 7-day escrow period for all \texttt{ETH}-denominated withdrawals. Staked \texttt{EIGEN} has a fixed 24-day escrow period. & Withdrawals need to be limited because EigenLayer introduces new slashing conditions. Native restaked \texttt{ETH} may be withdrawn from the beacon chain during the EigenLayer escrow period. Each AVS could add its rate limiting in addition to the system-wide minimums. \\
    \hline
    \textit{Chainlink} & Oracle safety & Fixed 28-day cool-down period before \texttt{LINK} is claimable. & Staking provides safety and availability conditions for data feeds. Withdrawals are rate-limited to ensure slashing has time to take place. \\
    \hline
    \end{tabularx}
    \caption{Comparing staking and withdrawal mechanisms between EigenLayer and Chainlink, two app-layer protocols with slashing.}
    \label{tab:applayer-table}
\end{table}
\endgroup

\subsection{Other applications of staking}
Beyond other blockchains, some applications have implemented staking and slashing mechanisms at the application layer of Ethereum. Table~\ref{tab:applayer-table} performs the same high-level analysis of two such mechanisms.

EigenLayer and Chainlink use stake for slightly different purposes than the chains outlined in Table~\ref{tab:proof-of-stake-table}. EigenLayer creates a platform for buying and selling ``economic security''; services built on EigenLayer (called ``Actively Validated Services'' or `AVSs') purchase this security by incentivizing capital to delegate to an operator running their service. Because EigenLayer encumbers capital with additional slashing conditions, it also enforces a protocol-wide escrow period for stake removal. It is worth noting that the Ethereum withdrawal period can occur concurrently with the EigenLayer escrow period \citep{eigelayerescrow}. 
Further, services buying security from EigenLayer can impose further constraints on the capital allocated to their system. 
Chainlink, on the other hand, uses stake to provide security for the data feeds supplied by their oracle network \citep{breidenbach2021chainlink}. This stake may be slashed for ``less objective'' faults (e.g., slashing for being offline and not providing a price feed), which was recently dubbed ``inter-subjective slashing'' in \cite{eigelayerpaper} and may grow to play a significant role in the future designs of slashing protocols.

\subsection{Liquid staking \& restaking tokens}
Liquid staking tokens (LSTs) make a design trade-off when choosing how much of the capital in their system to deploy into consensus mechanisms. If they deploy too much of it, the withdrawals will be rate-limited by the underlying protocol, leading to a more capital-efficient protocol at the cost of a worse UX (slower withdrawals). Keeping some liquidity available for immediate redemption improves the UX, but any capital in that state is not cash-flowing. LSTs are fully collateralized and thus do not face insolvency risk, but holders face the duration risk of holding the LST for however long the withdrawal takes. Liquid restaking tokens (LRTs) have a more complex design space, where they must balance withdrawals against various underlying protocols and services. Their withdrawal mechanisms are plagued by the nature of various protocol rewards denominated in different tokens and emissions rates. \cite{lrtsfromfirst} explores some design trade-offs, including a market for withdrawals. Overall, this design space is extensive and out-of-scope for the modeling of this paper, but it presents an exciting avenue for future research. 

\section{Conclusion}\label{sec:concl}

System designers of staking and restaking protocols face a fundamental trade-off between the security and utility. Based on the mechanisms we surveyed in Section \ref{sec:theory-and-practice}, the mechanisms currently in production maximally flexible given the rigidity they claim to require. In other words, nobody seems to be on the production-possibilities frontier of egress mechanisms in practice; we acknowledge that the practical engineering constraints, e.g., as described in the design of Ethereum's withdrawal mechanism in Section~\ref{sec:eth}, may play a significant role in the decision making of existing protocols. 

By formalizing this trade-off as a constrained optimization problem over mechanisms, we aim to improve the state of withdrawal systems more broadly. For blockchain designers, we distill our results into three pieces of advice. First, suppose your consistency constraints are over a longer time horizon than a single epoch. In that case, a queue with dynamic capacity can significantly reduce average wait times without sacrificing security --- \texttt{MINSLACK} (Algorithm~\ref{alg:1}) is a simple example of maximally processing the rate of withdrawals given a set of constraints. 
Second, if you believe that participants in the system may have heterogeneous disutility from waiting in the exit queue, their welfare would be improved by implementing a priority queue -- \texttt{PRIO-MINSLACK} (Algorithm~\ref{alg:prio-minslack}) can quickly decrease the overall disutility. 
Third, if you think that the time-sensitivity or arrival process of future withdrawal requests is particularly fat-tailed, be sure to reserve some capacity in the system to allow the processing of highly time-sensitive withdrawals during periods of congestion --- \texttt{$\alpha$-MINSLACK} (Algorithm~\ref{alg:prop-minslack}) is an example of this reservation. 

We point to a few intriguing directions in terms of future work. Firstly, several empirical questions have been raised by this study. Assessing the actual staker surplus lost from sub-optimal queue designs would be helpful. The protocol may care about this staker surplus because reducing the staker disutility may lessen the emissions needed to incentivize token holders to stake in the first place. Further study on the heterogeneity in time preferences among stakers would help determine whether pay-for-priority systems are worth considering. Lastly, validator utility functions that are non-linear (e.g., a validator who needs their withdrawal within the next week but doesn't care when) may lead to different design considerations and optimal withdrawal mechanisms.

On the theoretical side, note that some of the pay-for-priority systems we have proposed serve as benchmarks and are unlikely to be implementable in practice (e.g., the dynamic programming-based efficient allocation in Algorithm \ref{alg:optimal}, which is both computationally difficult and requires knowledge of the distribution of withdrawal requests). Other mechanisms (e.g., \texttt{PRIO-MINSLACK}, Algorithm \ref{alg:prio-minslack}) are feasible as a pay-your-bid mechanism --- reminiscent of the Bitcoin (and Ethereum before EIP-1559) transaction-fee mechanisms. Similar concerns faced in those contexts, users having to choose an appropriate bid, may apply in withdrawal mechansims too. The natural question is whether designs with better user experience, analogous to EIP-1559, exist in this setting.

\newpage

\bibliographystyle{econometrica}
\bibliography{tfm}

\end{document}